\newtheorem{corollary}{Corollary}
\newtheorem{proposition}{Proposition}
\newtheorem{assumption}{Assumption}
\newtheorem{lemma}{Lemma}
\newtheorem{remark}{Remark}
\def\begequarr{\begin{eqnarray}}
\def\endequarr{\end{eqnarray}}
\def\begequarrs{\begin{eqnarray*}}
\def\endequarrs{\end{eqnarray*}}
\def\begequ{\begin{equation}}
\def\endequ{\end{equation}}
\def\begequs{\begin{equation*}}
\def\endequs{\end{equation*}}
\def\begite{\begin{itemize}}
\def\endite{\end{itemize}}
\def\begcen{\begin{center}}
\def\endcen{\end{center}}
\def\begrem{\begin{remark}\rm}
\def\endrem{\end{remark}}
\def\ba{\begin{array}}
\def\ea{\end{array}}
\def\col{ \mbox{col}\; }
\def\dst{\displaystyle}
\def\beeq#1{\begin{equation}{#1}\end{equation}}
\def\qmx#1{\begin{pmatrix}{#1}\end{pmatrix}}
\begin{document}
\title{\Huge Adaptive Semiglobal Nonlinear Output Regulation: An Extended-State Observer Approach}
\author{Lei Wang, Christopher M. Kellett
\thanks{L. Wang and C. Kellett are with Faculty of Engineering and Built Environment, University of Newcastle, Australia.
 E-mail: (wanglei\_0201@163.com; chris.kellett@newcastle.edu.au).}%
        }
\date{}
\maketitle
\begin{abstract}
This paper proposes a new extended-state observer-based framework for adaptive nonlinear regulator design of a class of nonlinear systems, in the general nonequilibrium theory. By augmenting an extended-state observer with an internal model, one is able to obtain an estimate of the term containing uncertain parameters, which then makes it possible to design an adaptive internal model in the presence of a general \emph{nonlinearly parameterized} immersion condition.
\end{abstract}

\section{Introduction}

The output regulation problem aims at controlling a disturbed system so as to achieve boundedness of the resulting trajectories and asymptotic convergence of the output towards a prescribed trajectory. Several frameworks have been established for this problem. Due to its ability to cope with uncertainties, the internal model-based method has been regarded as one of the most promising approaches, particularly since the milestone contributions for linear systems in \cite{Francis&Wonham(1976)} and nonlinear systems in \cite{IsidoriByrnes(1990)}. The main idea of this method is to appropriately incorporate the controller with the structure of an exosystem that generates the disturbance and the tracking trajectory.

In the design of an internal model-based regulator, a key step is to design an appropriate internal model to generate the steady state input such that the internal model property is fulfilled.
Several systematic design methods have been developed such as in \cite{Khalil(1994),Huang&Chen(2004),Huang2004,MarconiPraly&Isidori(2007),Lorenzo&Praly(TAC2008)}. Among them, in terms of a constructive design, significant attention has been attracted by  the ``immersion condition", which requires the solution of the regulator equations to satisfy some specific differential equations (i.e., the immersed dynamics). It is noted that, if there exist parameter uncertainties in the exosystem, the corresponding immersed dynamics would be uncertain in general, which makes the design of internal model challenging. To cope with parameter uncertainties, in \cite{Serrani&Isidori&Marconi2001} the internal model is augmented with an identifier, which is appropriately designed via the adaptive design methodology \cite{Krstic(book1995)}. Motivated by this adaptive framework, several relevant results have been reported that differ in the kind of exosystems (linear \cite{Ding(2003),Wang&Chen&Xu(TAC2018)} and nonlinear \cite{Marino&Tomei(TAC2013)} exosystems), in the kind of available information (state and output feedback), and in the kind of controlled systems (linear \cite{Marino&Tomei(TAC2003)} and nonlinear \cite{Xu&Wang&Chen(2016SCL)} systems).
On the other hand, the above mentioned ``immersion conditions" are formulated on an extra assumption that the regulator equations are solvable. This fundamentally limits the class of controlled systems that can be handled.  In \cite{Byrnes&Isidori(2003),Byrnes&Isidori(2004)}, this extra assumption is removed by taking advantage of the nonequilibrium theory of nonlinear output regulation. In \cite{Priscoli&Marconi&Isidori(2006)}, the corresponding extension to adaptive nonlinear output regulation is addressed.

Despite the aforementioned efforts, research on adaptive internal model design is still at quite an early stage. In fact, the immersion conditions in the existing design methods are quite restrictive, at least in the following two aspects. Firstly, the immersed dynamics is usually required to be linear, hence limiting the exosystem to be linear generally. It is noted that the only exception is \cite{Priscoli&Marconi&Isidori(2006)}, where the immersed dynamics is assumed to be in the output-feedback form. Besides, as in \cite{Krstic(book1995)}, the design of all adaptation laws, to the best knowledge of the authors, is  based on the idea of ``cancellation", that is to cancel the term containing the unknown parameters when computing the derivative of the Lyapunov function, which usually requires a linearly parameterized immersion condition. This in turn fundamentally limits the class of exogenous and controlled systems.

In order to deal with a broad class of exogenous and controlled systems, this paper studies the adaptive nonlinear output regulation problem with a general immersion condition, in the general nonequilibrium theory of nonlinear output regulation developed in \cite{Byrnes&Isidori(2003),Priscoli&Marconi&Isidori(2006)}. Inspired by \cite{Freidovich&Khalil(TAC(2008),Han(1995),Wang&Isidori(SCL2015)},  a new extended-state observer-based design paradigm is developed to construct an adaptive nonlinear internal model. By taking advantage of the extra state provided by the extended-state observer, one is able to obtain an estimate of the term containing the uncertain parameter to be estimated, which then can be utilised to achieve asymptotic identification.  It is noted that the proposed method allows a nonlinearly parameterized immersion condition. More specifically, the uncertain parameters in the immersed dynamics can appear in a ``monotonic-like structure", with linear parameterization as a particular case.

The paper is organized as follows. Section \ref{sec-2} gives the problem formulation and some standing assumptions. In Section \ref{sec-3}, the main results are addressed by presenting the design of the adaptive internal model and the stability analysis of the resulting closed-loop system. An illustrative example is presented in Section \ref{sec-example} to show the validity of the prosed method. A brief conclusion is made in Section \ref{sec-con}.

{\bf Notations: } For any positive integer $d$, $(A_d,B_d,C_d)$ is used to denote the matrix triplet in the prime form. Namely, $A_d$  denotes a shift matrix of dimension $d\times d$ whose all superdiagonal entries are one and other entries are all zero, $B_d$ denotes a $d\times 1$ vector whose entries are all zero except the last one which is equal to 1, and $C_d$ is a $1\times d$ vector whose entries are all zero except the first one which is equal to 1. A function $f:\mathbb{R}_+:=[0,\infty)\rightarrow\mathbb{R}_+$ is of class $\mathcal{K}$, if it is continuous, positive definite, and strictly increasing. A class $\mathcal{K}$ function is of class $\mathcal{K}_{\infty}$ if it is unbounded. A continuous function $\delta:\mathbb{R}_+\times\mathbb{R}_+\rightarrow\mathbb{R}_+$ is of class $\mathcal{KL}$ if, for each fixed $t\geq0$, the function $\delta(\cdot,t)$ is of class $\mathcal{K}$ and, for each fixed $s>0$, $\delta(s,\cdot)$ is strictly decreasing and $\lim_{t\rightarrow\infty}\delta(s,t)=0$.
\section{Preliminaries}
\label{sec-2}
\subsection{Problem Statement}
\label{subsec-problem}

Consider the system
\beeq{\label{inisys}\ba{rcl}
\dot z &=& f_0(\rho,w,z) + f_1(\rho,w,z,x)x\,\\
\dot x &=& q(\rho,w,z,x) + b(\rho,w,z,x)u\,\\
y_e &=& x
\ea}
with state $z\in\mathbb{R}^n$ and $x\in\mathbb{R}$, control input $u\in\mathbb{R}$, regulated output $y_e\in\mathbb{R}$, and in which $\rho\in\mathbb{R}^p$ and $w\in\mathbb{R}^s$ denote the exogenous input, generated by the exosystem
\beeq{\ba{rcl}\label{exosys}
\dot \rho &=& 0\,\\
\dot w &=& s(\rho,w)\,,
\ea}
with the initial conditions $\rho$ and $w_0$ taking values from compact sets $\mathcal{P}\subset\mathbb{R}^p$ and $\mathcal{W}\subset\mathbb{R}^s$, respectively.
As customary in the field of output regulation, it is assumed that $\mathcal{P}\times\mathcal{W}$ is invariant for (\ref{exosys}), and there exists a constant $b_0>0$ such that
\beeq{\label{ineq-b}
b(\rho,w,z,x) \geq b_0
}
holds for all $(\rho,w,z,x)\in\mathcal{P}\times\mathcal{W}\times\mathbb{R}^n\times\mathbb{R}$.
Additionally, functions $f_0(\cdot),f_1(\cdot),q(\cdot),b(\cdot),s(\cdot)$ are assumed to be sufficiently smooth.

In this framework, the output regulation problem of interest can be summarized as below. Given any compact sets $\mathcal{C}_z\subset\mathbb{R}^n$, $\mathcal{C}_x\subset\mathbb{R}$, all trajectories of system (\ref{inisys})-(\ref{exosys}), controlled by an output feedback regulator of the form
\beeq{\label{controller-xc}\ba{rcl}
\dot x_c &=& \varphi_c(x_c,y_e)\,,\quad x_c\in\mathbb{R}^{n_c}\\
u &=& \gamma_c(x_c,y_e)\,,
\ea}
with all initial conditions ranging over $\mathcal{P}\times\mathcal{W}\times\mathcal{C}_z\times\mathcal{C}_x\times\mathcal{C}_{x_c}$ with $\mathcal{C}_{x_c}$ being any given compact set in $\mathbb{R}^{n_c}$, are bounded and $\dst\lim_{t\rightarrow\infty}y_e(t)=0$.

With this in mind, it is observed that by viewing $x$ as the output, system (\ref{inisys}) cascaded with (\ref{exosys})  has a well-defined relative degree one, and the corresponding zero dynamics, driven by the control input
\[
u=- \frac{q(\rho,w,z,0)}{b(\rho,w,z,0)}\,,
\]
is given by
\[\ba{rcl}
\dot\rho &=& 0\,\\
\dot w &=& s(\rho,w)\,\\
\dot z &=& f_0(\rho,w,z)\,.
\ea\]
which, with $\mathbf z:=(\rho,w,z)$, can be compactly rewritten as
\beeq{\label{sys-czd}
\dot \mathbf{z} = \mathbf{f}(\mathbf{z})
}
Accordingly, we set $\mathcal{Z}:=\mathcal{P}\times\mathcal{W}\times\mathcal{C}_z$ with $\mathcal{C}_z\subset\mathbb{R}^n$ being any given compact set.

\begin{remark}
This paper is mainly interested in nonlinear systems having normal form. Although system (\ref{inisys}) has relative degree one, its extension to higher relative degree can be trivially achieved as in \cite{Isidori(1999)} by redefining a regulated output so as to reduce the relative degree to one.
\end{remark}

\subsection{Standing Assumptions}

In order to deal with a more general class of nonlinear systems, following \cite{Priscoli&Marconi&Isidori(2006)} we make some assumptions on the zero dynamics (\ref{sys-czd}).

\begin{assumption}\label{ass-MP}
There exist a nonempty, compact set $\mathcal{A}_z\subset\mathbb{R}^n$, and a class $\mathcal{K L}$ function $\delta_1(\cdot,\cdot)$  such that
for all $\mathbf{z}_0\in\mathcal{P}\times\mathcal{W}\times\mathbb{R}^n$,
\[
\mbox{dist}(\mathbf{z}(t,\mathbf{z}_0), \mathcal{Z}_c)\leq\delta_1(\mbox{dist}(\mathbf{z}_0, \mathcal{Z}_c),t)\quad \mbox{for all $t\geq0$}
\]
where $\mathcal{Z}_c:=\mathcal{P}\times\mathcal{W}\times\mathcal{A}_z$, and $\mathbf{z}(t,\mathbf{z}_0)$ denotes the solution of system (\ref{sys-czd}) passing through $\mathbf{z}_0$ at time $t=0$.

\end{assumption}

\begin{assumption}\label{ass-LES}
There exist  constants $M\geq1$, $a>0$, and $\delta_2>0$ such that for all $\mathbf{z}_0\in\mathcal{P}\times\mathcal{W}\times\mathbb{R}^n$,
\[
\mbox{dist}(\mathbf{z}_0, \mathcal{Z}_c)\leq\delta_2  \Rightarrow \mbox{dist}(\mathbf{z}(t,\mathbf{z}_0),\mathcal{Z}_c)\leq Me^{-a\,t}\mbox{dist}(\mathbf{z}_0,\mathcal{Z}_c)\,.
\]
\end{assumption}
\vspace{2mm}

\begin{remark}
Assumption \ref{ass-MP} indicates that $\mathcal{Z}_c$ is an invariant and asymptotically stable compact set under (\ref{sys-czd}). More specifically, in the sense of \cite{Byrnes&Isidori(2003)}, $\mathcal{Z}_c$ is the $\omega$-limit set of $\mathcal{P}\times\mathcal{W}\times\mathbb{R}^n$ under (\ref{sys-czd}). It can also be seen that there exists a compact set $\mathbf Z$ such that the solution of (\ref{sys-czd}) satisfies $\mathbf{z}(t,\mathbf{z}_0)\in\mathbf Z$ for all $t\geq 0$, so long as $\mathbf{z}_0\in\mathcal{Z}$.
Assumption \ref{ass-LES} implies that $\mathcal{Z}_c$ is locally exponentially stable for (\ref{sys-czd}), which plays a significant role in the subsequent analysis of asymptotic stability.
\end{remark}

\begin{remark}
Assumption \ref{ass-MP} can be regarded as the minimum-phase assumption in general nonequilibrium theory. Compared with the conventional minimum-phase assumption such as in \cite{Serrani&Isidori&Marconi2001,Huang2004}, the main benefit is that the extra assumption on the solvability of the regulator equations is removed, which broadens the class of systems that can be addressed.
\end{remark}

To this end, a general \emph{nonlinearly parameterized} immersion condition will be proposed, which leads to a constructive design of the internal model.

\begin{assumption}\label{ass-immersion}
There exist positive integers $d$ and $q$,
a $C^0$ map
\[\ba{rcl}
\theta & : &\mathcal{P} \,\rightarrow \, \mathbb{R}^q\,,\\
&& \rho \,\mapsto\, \theta(\rho)\,,
\ea\]
a $C^d$ map
\[\ba{rcl}
\tau & : &\mathcal{Z} \,\rightarrow \, \mathbb{R}^d\,,\\
&& \mathbf{z} \,\mapsto\, \tau(\mathbf{z})\,,
\ea\]
and a $C^2$ map $\phi:\mathbb{R}^p\times\mathbb{R}^d\rightarrow\mathbb{R}$ such that the following identities
\beeq{\label{tau}\ba{rcl}
\frac{\partial\tau}{\partial\mathbf{z}}\mathbf{f}(\mathbf{z}) &=& A_d\tau(\mathbf{z}) + B_d\phi(\theta(\rho),\tau(\mathbf{z}))\,\\
\mathbf{q}_0(\mathbf{z}) &=& C_d\tau(\mathbf{z})
\ea}
with $\dst\mathbf{q}_0(\mathbf{z})=-\frac{q(\rho,w,z,0)}{b(\rho,w,z,0)}$, hold for all $\mathbf{z}\in\mathcal{Z}_c$ and $\rho\in\mathcal{P}$.
\end{assumption}

\begin{remark}
In Assumption \ref{ass-immersion}, the immersed dynamics (\ref{tau}) is allowed to be dependent on the uncertain parameter $\rho$, which motivates us to incorporate the internal model with an identifier. Since $\rho$ appears only in the function $\theta(\cdot)$, for convenience we regard $\theta$ as an uncertain parameter to be estimated in the sequel, though this may result in overparameterization.

In the literature, several immersion conditions for adaptive output regulation have been proposed. It is worth noting that compared to the existing ones, Assumption \ref{ass-immersion} is much weaker, at least in the following two aspects. In previous work, the immersion map $\tau$ is required to satisfy either a linear equation (e.g. \cite{Serrani&Isidori&Marconi2001}), or a nonlinear equation but in the ``output-feedback form" (e.g. \cite{Priscoli&Marconi&Isidori(2006)}). Fundamentally, all these forms in \cite{Serrani&Isidori&Marconi2001,Priscoli&Marconi&Isidori(2006)} can be transformed to the form (\ref{tau}). Moreover, in all the previous related literature, the immersed dynamics (\ref{tau}) is required to be linearly parameterized, while this paper permits a nonlinear parameterization, with linear parameterization as a particular case.
\end{remark}

In this paper, we aim to handle a more general \emph{immersion property} having a nonlinearly parameterized function $\phi(\theta,\tau)$ in the uncertain parameter $\theta$. We require the following properties on  $\phi(\cdot,\cdot)$.

\begin{assumption}\label{ass-PE}
There exists a smooth function $\beta(\cdot):\mathbb{R}^d\rightarrow\mathbb{R}^p$ having the properties:
\begin{itemize}
  \item[(i)]  There exist $\epsilon_{0,i}>0$, $i=1,\ldots,q$ such that for any $r\in\tau(\mathcal{Z}_c)$ \footnote{For simplicity, we use  $\tau(\mathcal{Z}_c)$ to denote the set of $\tau(\mathbf{z})$ for all $\mathbf{z}\in \mathcal{Z}_c$.}, and any $s_1,s_2\in\mathcal{B}_0^q:=\{\theta\in\mathbb{R}^q:|\theta_i|\leq a_{0,i} +\epsilon_{0,i}\}$ with $a_{0,i}=\max_{\rho\in\mathcal{P}}|\theta_i(\rho)|$,  the inequality
\beeq{\label{ineq-Mon}
(s_1-\theta)^{\top}\beta(r)\frac{\partial\phi(s_2,r)}{\partial s_2}(s_1-\theta) \leq 0
}
holds, with $\theta_i$ denoting the $i$-th entry of vector $\theta$;
  \item [(ii)] For any $\mathbf{z}_0\in\mathcal{Z}_c$ and $s_1,s_2\in\mathcal{B}_0^q$,  the persistent excitation (PE) condition
\beeq{\ba{l}\label{PE-con}
\phi(s_1,\tau(\mathbf{z}(t,\mathbf{z}_0)))-\phi(s_2,\tau(\mathbf{z}(t,\mathbf{z}_0))) =0 \,\\\quad \Longrightarrow \quad s_1=s_2\,
\ea}
is fulfilled, where $\mathbf{z}(t,\mathbf{z}_0)$ denotes the trajectory of (\ref{sys-czd}) passing through $\mathbf{z}_0$ at $t=0$.
\end{itemize}
\end{assumption}

\begin{remark}
Assumption \ref{ass-PE}.(i) means that there exists a $\,$smooth function $\beta(r)$ such that for all $r\in\tau(\mathcal{Z}_c)$, the function $\beta(r)\phi(s,r)$ is \emph{monotonically decreasing} in $s\in\mathcal{B}_0^q$. In this respect, we say that the function $\phi(s,r)$ satisfying Assumption \ref{ass-PE}.(i) is in the \emph{monotonic-like structure}. If as in \cite{Serrani&Isidori&Marconi2001,Priscoli&Marconi&Isidori(2006)}, the function $\phi$ is linearly parameterized, that is $\phi(s,r)$ has the form of $s^{\top}\psi(r)$ for some function $\psi(\cdot)$, then Assumption \ref{ass-PE}.(i) can always be fulfilled by choosing $\beta(r)=\psi(r)$. Indeed, the class of functions $\phi(r,s)$ satisfying such a monotonicity condition includes not only all linearly parameterized functions, but also some nonlinearly parameterized functions, such as $\mbox{arctan}(s^\top\psi(r))$ or $\dst\frac{\psi_0(r)}{\sum_{i=1}^{p}\theta_i\psi_i(r)+\psi_{p+1}(r)}$, where the corresponding function $\beta(r)$ can be chosen as $\psi(r)$ or
$-\qmx{\psi_0(r)\psi_1(r) & \cdots & \psi_0(r)\psi_p(r)}^{\top}$, respectively.
\end{remark}

It is observed that the maps $\phi(s,r)$ and $\beta(r)$ are continuously differentiable and Assumption \ref{ass-immersion} and \ref{ass-PE} are respectively made over the compact sets $s\in\mathcal{B}_0^q$ and $(s,r)\in\mathcal{B}_0^q\times\tau(\mathcal{Z}_c)$. In view of this, there is no loss of generality to suppose that functions $\phi(\cdot,\cdot)$ and $\beta_i(\cdot)$ are globally Lipschitz and bounded, i.e., there exist $a_{1}>0$ and $a_{2,i}>0$, $i=1,\ldots,q$ such that inequalities
\beeq{\label{bound-phi-beta}
|\phi(s,r)| \leq a_{1}\,,\quad |\beta_i(r)|\leq a_{2,i}
}
with $\beta_i$ denoting the $i$-th entry of vector $\beta$, hold for all $s\in\mathbb{R}^q$, $r\in\mathbb{R}^d$.

\section{Adaptive Regulator Design}
\label{sec-3}
\subsection{Adaptive Internal Model}

With Assumption \ref{ass-immersion}, if $\theta$ were known, then  we could design an internal model of the form
\beeq{\label{IM-1}
\dot \eta = A_d\eta + B_d\phi(\theta,\eta) + v_{\eta}\,
}
in which ${\eta}\in\mathbb{R}^d$, and $v_{\eta}\in\mathbb{R}^d$ denotes the input of the internal model, and the control input can be chosen as
\beeq{\label{u}
u = v_u + C_d\eta
}
where $v_u$ is the residual input.

However, since $\theta$ is unknown, the internal model (\ref{IM-1}) is not implementable. To overcome this obstacle, an extra identifier can be used to provide an estimate of $\theta$, denoted by $\hat\theta\in\mathbb{R}^q$. It is worth noting that, due to the presence of the nonlinear parameterization, we cannot take advantage of the usual ``cancellation" idea (e.g. \cite{Serrani&Isidori&Marconi2001,Priscoli&Marconi&Isidori(2006)}).

Inspired by various important results on the design of extended-state observers (e.g. \cite{Freidovich&Khalil(TAC(2008),Han(1995),Wang&Isidori(SCL2015)}), we propose a new adaptive internal model, having the form
\beeq{\label{AIM}\ba{l}
\dot \eta = A_d\eta + B_d\phi(\hat\theta,\eta) -\mbox{satv}((A_d+\lambda I){\hat\xi})-B_d\mbox{sat}_{{d+1}}(\hat\sigma)\,\,\\
\dot {\hat\theta} = \beta(\eta)\mbox{sat}_{{d+1}}(\hat\sigma)-\mbox{dzv}(\hat\theta)\,\\
\dot {\hat\xi} = A_d\hat\xi+B_d\hat\sigma-\mbox{satv}((A_d+\lambda I){\hat\xi})-B_d\mbox{sat}_{{d+1}}(\hat\sigma) \,\\ \qquad - \Lambda_\ell  G(v_u+\hat\xi_{1})\,\\
\dot {\hat\sigma} = -\ell^{d+1} g_{d+1}(v_u+\hat\xi_{1})\,\\
\ea}
where $\hat\xi:=\col(\hat\xi_1,\ldots,\hat\xi_d)$, $\lambda>0$, $\Lambda_{\ell}=\mbox{diag}(\ell,\ldots,\ell^{d})$, $G=\mbox{col}(g_1,\ldots,g_d)$, functions $\mbox{sat}_{i}(\cdot)$ for $i=1,\ldots,d+1$ have the form
\[
\mbox{sat}_{i}(s)=\left\{\ba{l} s\,,\qquad |s|\leq l_i \\ s-\mbox{sign}(s)\dst\frac{(|s|-l_i)^2}{2} \,, \quad l_i< |s|< l_i+1\\ l_i+\frac{1}{2}\,,\quad |s|\geq l_i+1\,, \ea\right.
\]
with saturation level $l_i$,
$\mbox{satv}(\cdot):\mathbb{R}^d\rightarrow\mathbb{R}^d$ denotes a vector-valued saturation function, defined by $\mbox{satv}(s_1,\ldots,s_d)=\col(\mbox{sat}_{1}(s_1),\ldots,\mbox{sat}_{d}(s_d))$,
and $\mbox{dzv}(\cdot)$ denotes a vector-valued dead-zone function, each element of which is a function of the form
\[
\mbox{dz}_{i}(s)=\left\{\ba{l} 0 \,,\qquad |s|\leq a_{0,i} \\ c_i\dst\frac{(|s|-a_{0,i})^2}{2\epsilon_{0,i}} \mbox{sign}(s)\,,\quad a_{0,i} < |s| < a_{0,i}+\epsilon_{0,i}\,\\ c_i s-c_i\left(a_{0,i}+\dst\frac{\epsilon_{0,i}}{2}\right)\mbox{sign}(s) \,,\quad |s|\geq a_{0,i}+\epsilon_{0,i}\,. \ea\right.
\]
As it can be seen from Fig. \ref{fig0}, functions $\mbox{sat}_i$ and $\mbox{dz}_i$ are constructed to be smooth.
All design parameters $g_i$, $l_i$, and $c_i$ will be defined later in Proposition \ref{pro-1}, (\ref{sl}), and (\ref{c-i}), respectively.
\begin{figure}[thpb]
\begin{center}
\centering\includegraphics[height=30mm,width=90mm]{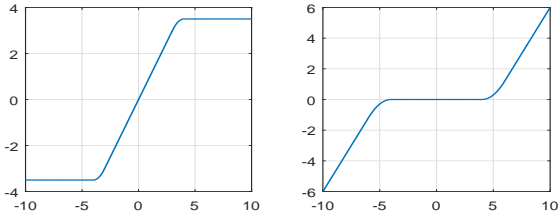} \caption{Left: plot of function $\mbox{sat}_i$ with $l_i=3$; and right: plot of function $\mbox{dz}_i$ with $c_i=1.2,a_{0,i}=4, \epsilon_{0,i}=2$.}
\label{fig0}
\end{center}
\end{figure}

By cascading system (\ref{inisys}) with the adaptive internal model (\ref{AIM}) and the control input (\ref{u}), we obtain a cascaded system of the form
\beeq{\label{sys-aug}\ba{l}
\dot\rho = 0\,\\
\dot w = s(\rho,w)\,\\
\dot z = f_0(\rho,w,z) + f_1(\rho,w,z,x)x\,\\
\dot \eta = A_d\eta + B_d\phi(\hat\theta,\eta) -\mbox{satv}((A_d+\lambda I){\hat\xi})-B_d\mbox{sat}_{{d+1}}(\hat\sigma)\,\,\\
\dot {\hat\theta} = \beta(\eta)\mbox{sat}_{{d+1}}(\hat\sigma)-\mbox{dzv}(\hat\theta)\,\\
\dot {\hat\xi} = A_d\hat\xi+B_d\hat\sigma-\mbox{satv}((A_d+\lambda I){\hat\xi})-B_d\mbox{sat}_{{d+1}}(\hat\sigma)\,\\ \qquad - \Lambda_\ell  G(v_u+\hat\xi_{1})\,\\
\dot {\hat\sigma} = -\ell^{d+1} g_{d+1}(v_u+\hat\xi_{1})\,\\
\dot x = q(\rho,w,z,x) + b(\rho,w,z,x)(C_d\eta+v_u)\,\\
\ea}

It is observed that system (\ref{sys-aug}), viewing $v_{u}$ as control input and $x$ as output, has a well-defined relative degree one, and the corresponding extended zero dynamics, forced by
\beeq{\label{f-v-x}
v_u = -C_d\eta -\frac{q(\rho,w,z,0)}{b(\rho,w,z,0)}\,,
}
can be given by
\beeq{\label{sup-augsys}\ba{l}
\dot\rho = 0\,\\
\dot w = s(\rho,w)\,\\
\dot z = f_0(\rho,w,z) \,\\
\dot \eta = A_d\eta + B_d\phi(\hat\theta,\eta) -\mbox{satv}((A_d+\lambda I){\hat\xi})-B_d\mbox{sat}_{l_{d+1}}(\hat\sigma)\,\,\\
\dot {\hat\theta} = \beta(\eta)\mbox{sat}_{{d+1}}(\hat\sigma)-\mbox{dzv}(\hat\theta)\,\\
\dot {\hat\xi} = A_d\hat\xi+B_d\hat\sigma-\mbox{satv}((A_d+\lambda I){\hat\xi})-B_d\mbox{sat}_{{d+1}}(\hat\sigma)\,\\ \dst \qquad - \Lambda_\ell  G\left(-C_d\eta -\frac{q(\rho,w,z,0)}{b(\rho,w,z,0)}+\hat\xi_{1}\right)\,\\
\dot {\hat\sigma} = -\ell^{d+1} g_{d+1}\left(-C_d\eta -\dst\frac{q(\rho,w,z,0)}{b(\rho,w,z,0)}+\hat\xi_{1}\right)\,\\
\ea}

By simple calculations, it is observed that under Assumptions \ref{ass-MP} and \ref{ass-immersion}, the adaptive controller (\ref{u})-(\ref{AIM}) fulfills the \emph{internal model property}, relative to the set $\mathcal{Z}_c$. Therefore, in light of previous analysis, according to \cite{Byrnes&Isidori(2003)}, the desired adaptive output regulation problem can be solved by the adaptive controller (\ref{u})-(\ref{AIM}) with the residual control $v_u$ having the form $v_u=-\kappa x$, if the extended zero dynamics (\ref{sup-augsys}) can be shown to to possess an asymptotically (locally exponentially) stable compact attractor.

\begin{remark}
As will be shown in next subsection, (\ref{AIM}) contains an extended state observer, i.e., the $(\hat\xi,\hat\sigma)$ dynamics, in which $\hat\sigma$ denotes the extra estimate. Using this extra estimate, we are able to take advantage of the nonlinear parameterization structure given in Assumption \ref{ass-PE}, which thus enables the identifier $\hat\theta$-dynamics to achieve an asymptotic estimate of the uncertain parameters $\theta$.
\end{remark}

\subsection{Stability Analysis of Extended Zero Dynamics (\ref{sup-augsys})}

In the previous subsection, with the design of (\ref{AIM}) for system (\ref{sys-aug}), we obtain an extended zero dynamics (\ref{sup-augsys}), whose stability analysis will be presented in the sequel.

As before, we write $\mathbf{z}=(\rho,w,z)$.
Consider the change of coordinates $\tilde\eta=\eta-\tau(\mathbf{z})$.
This, recalling (\ref{sys-czd}), transforms (\ref{sup-augsys}) to the form
\beeq{\label{sup-augsys-2-1}\ba{l}
\dot \mathbf{z} = \mathbf{f}(\mathbf{z})\,\\
\dot {\tilde\eta} = A_d\tilde\eta + B_d[\phi(\hat\theta,\tilde\eta+\tau)-\phi(\theta,\tau)]\,\\ \qquad -\mbox{satv}((A_d+\lambda I){\hat\xi})-B_d\mbox{sat}_{{d+1}}(\hat\sigma) + \varsigma(\mathbf{z})\,\\
\dot {\hat\theta} = \beta(\tilde\eta+\tau)\mbox{sat}_{{d+1}}(\hat\sigma)-\mbox{dzv}(\hat\theta)\,\\
\dot {\hat\xi} = A_d\hat\xi + B_d\hat\sigma-\mbox{satv}((A_d+\lambda I){\hat\xi}) -B_d\mbox{sat}_{{d+1}}(\hat\sigma)\,\\ \,\quad\,+ \Lambda_\ell  G(\tilde\eta_1-\hat\xi_{1})\,\\
\dot {\hat\sigma} = \ell^{d+1} g_{d+1}(\tilde\eta_1-\hat\xi_{1})\,\\
\ea}
where
\[\ba{l}
\dst\varsigma(\mathbf{z}) = A_d\tau(\mathbf{z}) + B_d\phi(\theta,\tau(\mathbf{z})) -\frac{\partial\tau(\mathbf{z})}{\partial w}s(w)\,\\ \qquad -\dst\frac{\partial\tau(\mathbf{z})}{\partial z}f_0(\mathbf{z})
\ea\]
is a term which vanishes in $\mathcal{Z}_c$ by Assumption \ref{ass-immersion}.

Let $\varsigma_i(\mathbf{z})$ denote the $i$-th element of the vector $\varsigma(\mathbf{z})$, and then set $\xi:=\col(\xi_1,\ldots,\xi_d)$ with
\[\ba{rcl}
\xi_1 &=& \tilde\eta_1\,\\
\dst \xi_2 &=& \tilde\eta_2 + \varsigma_1(\mathbf{z})\,\\
\dst \xi_i &=& \tilde\eta_i + \sum_{j=1}^{i-2}L^{i-j-1}_{\mathbf{f}}\varsigma_{j+1}(\mathbf{z})+\varsigma_{i-1}(\mathbf{z})\,,\quad 3\leq i\leq d
\ea\]
with $L$ denoting the Lie derivative, which also suggests that $\tilde\eta = \xi - \bar\varsigma(\mathbf{z})$ for an appropriately defined function $\bar\varsigma(\mathbf{z})$, satisfying $\bar\varsigma(\mathbf{z})=0$ for all $\mathbf{z}\in\mathcal{Z}_c$.

In view of the previous analysis, (\ref{sup-augsys-2-1}) can be rewritten as
\beeq{\label{sup-augsys-2}\ba{l}
\dot \mathbf{z} = \mathbf{f}(\mathbf{z})\,\\
\dot {\xi} = A_d\xi + B_d[\phi(\hat\theta,\xi+\tau(\mathbf{z})-\bar\varsigma(\mathbf{z}))-\phi(\theta,\tau(\mathbf{z}))]\,\\ \qquad -\mbox{satv}((A_d+\lambda I){\hat\xi})-B_d\mbox{sat}_{{d+1}}(\hat\sigma) + B_d\nu(\mathbf{z})\,\\
\dot {\hat\theta} = \beta(\xi+\tau(\mathbf{z}) - \bar\varsigma(\mathbf{z}))\mbox{sat}_{{d+1}}(\hat\sigma) - \mbox{dzv}(\hat\theta)\,\\
\dot {\hat\xi} = A_d\hat\xi +B_d\hat\sigma-\mbox{satv}((A_d+\lambda I){\hat\xi}) -B_d\mbox{sat}_{{d+1}}(\hat\sigma)\,\\ \,\quad\,+ \Lambda_\ell  G(\xi_1-\hat\xi_{1})\,\\
\dot {\hat\sigma} = \ell^{d+1} g_{d+1}(\xi_1-\hat\xi_{1})\,\\
\ea}
where $\dst \nu(\mathbf{z})= \sum_{i=1}^{d-1} L^{d-i}_{\mathbf{f}}\varsigma_i(\mathbf{z})+\varsigma_d(\mathbf{z})$.
It is noted that $\nu(\mathbf{z}) = 0$ for all $\mathbf{z}\in\mathcal{Z}_c$ and there exists a constant $a_3>0$ such that for all $\mathbf{z}\in\mathbf{Z}$,
\beeq{\label{bound-nu}
|\nu(\mathbf{z})|\leq a_3\,.
}

It then can be seen that the $(\hat\xi,\hat\sigma)$ dynamics in (\ref{sup-augsys-2}) can be viewed as an extended-state observer of the $\xi$ dynamics, with observer states $\hat\xi$ and $\hat\sigma$ respectively being used to estimate the variables $\xi$, and the ``perturbation" term
\beeq{\label{sigma}
\sigma:=\phi(\hat\theta,\xi+\tau(\mathbf{z})-\bar\varsigma(\mathbf{z}))-\phi(\theta,\tau(\mathbf{z}))+\nu(\mathbf{z})\,.
}
This observation thus motivates us to analyse the asymptotic stability of the extended zero dynamics (\ref{sup-augsys-2}) by using the nonlinear separation principle \cite{Isidori(1999)}, but in the general nonequilibirum theory.

Fix all coefficients of the dead-zone function $\mbox{dzv}(\cdot)$ as
\beeq{\label{c-i}
c_i> \frac{4a_1a_{2,i}+2a_{2,i}a_3}{\epsilon_{0,i}}\,,\quad i=1,\ldots,d\,,
}
with constants $a_1$, $a_{2,i}$, $a_3$, and $\epsilon_{0,i}$ being given by (\ref{bound-phi-beta}), (\ref{bound-nu}), and Assumption \ref{ass-PE}.(i).

With the above choice of $c_i$'s in mind, to apply the nonlinear separation principle to analyze the asymptotic stability of system (\ref{sup-augsys-2}), it is natural to first consider the \emph{auxiliary system}
\beeq{\label{auxisys}\ba{l}
\dot \mathbf{z} = \mathbf{f}(\mathbf{z})\,\\
\dot {\xi} = -\lambda\xi\,\\
\dot {\hat\theta} = \beta(\xi+\tau(\mathbf{z}) - \bar\varsigma(\mathbf{z}))[\phi(\hat\theta,\xi+\tau(\mathbf{z})-\bar\varsigma(\mathbf{z}))-\phi(\theta,\tau(\mathbf{z}))] \,\\ \qquad +\beta(\xi+\tau(\mathbf{z})- \bar\varsigma(\mathbf{z}))\nu(\mathbf{z}) - \mbox{dzv}(\hat\theta)\,\\
\ea}
whose stability properties can be characterized as below.

\begin{lemma}\label{lemma--1}
Suppose Assumptions \ref{ass-MP}, \ref{ass-immersion}, and \ref{ass-PE} hold.
Then the set $\mathcal{A}_a:=\mathcal{Z}_c\times\{0\}\times\{\theta\}$ is asymptotically stable under the flow (\ref{auxisys}), for every initial condition $(\mathbf{z}_0,\xi_0,\hat\theta_0)$ ranging over the set $\mathcal{M}:=\mathcal{Z}\times\mathbb{R}^d\times\mathbb{R}^p$.
\end{lemma}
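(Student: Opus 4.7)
The plan is to exploit the cascade structure of (\ref{auxisys}): the $\mathbf{z}$ and $\xi$ subsystems evolve autonomously and drive the $\hat\theta$-subsystem. Assumption \ref{ass-MP} gives asymptotic convergence $\mbox{dist}(\mathbf{z}(t),\mathcal{Z}_c)\to 0$ and a compact forward-invariant tube $\mathbf{Z}$, while $\xi(t)=e^{-\lambda t}\xi_0\to 0$ exponentially, so the partial attractor $\mathcal{Z}_c\times\{0\}$ is already asymptotically stable for the $(\mathbf{z},\xi)$-subsystem. The remaining task is to analyse $\hat\theta$.

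First I would establish forward boundedness of $\hat\theta$. Taking $V_\theta=\frac{1}{2}|\hat\theta-\theta|^2$ and differentiating along (\ref{auxisys}), the cross term involving $\phi$ and $\nu$ is bounded componentwise using (\ref{bound-phi-beta}) and (\ref{bound-nu}) by $(2a_1+a_3)a_{2,i}|\hat\theta_i-\theta_i|$. The choice (\ref{c-i}) of $c_i$ ensures that once $|\hat\theta_i|>a_{0,i}+\epsilon_{0,i}$, the contribution $-(\hat\theta_i-\theta_i)\mbox{dz}_i(\hat\theta_i)$ of the dead-zone strictly dominates this bound, so each $|\hat\theta_i|$ is eventually confined to $[-a_{0,i}-\epsilon_{0,i},\,a_{0,i}+\epsilon_{0,i}]$. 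This yields uniform boundedness of $\hat\theta$ together with a compact forward-invariant set containing $\mathcal{B}_0^q$.

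The next step is to prove $\hat\theta(t)\to\theta$ via an $\omega$-limit/invariance argument in the spirit of the nonequilibrium theory of \cite{Byrnes&Isidori(2003)}. Since every trajectory is bounded, its $\omega$-limit set $\Omega$ is nonempty, compact, and invariant, and by the preceding steps $\Omega\subset\mathcal{Z}_c\times\{0\}\times\mathcal{B}_0^q$. On $\Omega$ one has $\bar\varsigma(\mathbf{z})\equiv 0$, $\nu(\mathbf{z})\equiv 0$, $\xi\equiv 0$, and $\mbox{dzv}\equiv 0$ on $\mathcal{B}_0^q$, so the $\hat\theta$-dynamics restricted to $\Omega$ collapses to $\dot{\hat\theta}=\beta(\tau(\mathbf{z}))[\phi(\hat\theta,\tau(\mathbf{z}))-\phi(\theta,\tau(\mathbf{z}))]$. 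Writing the bracket as $\int_0^1 \frac{\partial\phi}{\partial s}(\theta+\mu(\hat\theta-\theta),\tau(\mathbf{z}))\,d\mu\,(\hat\theta-\theta)$ and invoking Assumption \ref{ass-PE}(i) gives $\dot V_\theta\leq 0$ on $\Omega$. LaSalle's principle then forces invariant trajectories into the set where $\phi(\hat\theta,\tau(\mathbf{z}(t)))\equiv\phi(\theta,\tau(\mathbf{z}(t)))$ along the zero-dynamics (\ref{sys-czd}), and the PE condition of Assumption \ref{ass-PE}(ii) upgrades this to $\hat\theta\equiv\theta$. Hence $\Omega\subseteq\mathcal{A}_a$, which yields attractivity.

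The main obstacle I anticipate is upgrading this attractivity to Lyapunov stability of $\mathcal{A}_a$, since the LaSalle/PE step on the $\omega$-limit set is global but does not by itself give a class-$\mathcal{KL}$ estimate near $\mathcal{A}_a$. I plan to combine the cascade structure with the local exponential stability of $\mathcal{Z}_c$ from Assumption \ref{ass-LES}, the linear exponential stability of the $\xi$-subsystem, and the smoothness of $\beta$, $\phi$ and $\tau$, to obtain local stability of $\mathcal{A}_a$ for the full auxiliary system, whereupon attractivity lifts to asymptotic stability. The other delicate point is tracking the condition $\theta+\mu(\hat\theta-\theta)\in\mathcal{B}_0^q$ along the mean-value path, which is precisely why Assumption \ref{ass-PE}(i) is formulated on the enlarged set with slack $\epsilon_{0,i}$ and why the dead-zone is designed to drive $\hat\theta$ into it.
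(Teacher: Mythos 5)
Your overall route --- boundedness of $\hat\theta$ via the dead-zone, reduction to the $\omega$-limit set where $\mathbf{z}\in\mathcal{Z}_c$, $\xi=0$, $\bar\varsigma(\mathbf{z})=\nu(\mathbf{z})=0$, then a Lyapunov/LaSalle argument combined with Assumption \ref{ass-PE} --- is essentially the paper's. But two steps as written would fail. First, your claim that $\mbox{dzv}\equiv 0$ on $\mathcal{B}_0^q$ is false: $\mbox{dz}_{i}(s)$ vanishes only for $|s|\leq a_{0,i}$ and is strictly nonzero on the band $a_{0,i}<|s|< a_{0,i}+\epsilon_{0,i}$, which lies inside $\mathcal{B}_0^q$; the boundedness step only drives $\hat\theta$ into $\mathcal{B}_0^q$, not into the smaller cube $\{|\hat\theta_i|\leq a_{0,i}\}$. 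Hence on the limit set the identifier dynamics is $\dot{\hat\theta}=\beta(\tau(\mathbf{z}))[\phi(\hat\theta,\tau(\mathbf{z}))-\phi(\theta,\tau(\mathbf{z}))]-\mbox{dzv}(\hat\theta)$ and the dead-zone must be kept: the paper does so, exploiting $(\hat\theta-\theta)^{\top}\mbox{dzv}(\hat\theta)\geq 0$ so that $\dot V_{\tilde\theta}\leq 0$ still holds, and the LaSalle conditions then force $\mbox{dzv}(\hat\theta^{\infty})=0$ \emph{before} the PE condition is invoked. Relatedly, your assertion that LaSalle directly yields $\phi(\hat\theta,\tau(\mathbf{z}(t)))\equiv\phi(\theta,\tau(\mathbf{z}(t)))$ is too quick: Assumption \ref{ass-PE}.(i) is not strict monotonicity, so $\tilde\theta^{\top}\beta(\tau(\mathbf{z}))[\phi(\hat\theta,\tau(\mathbf{z}))-\phi(\theta,\tau(\mathbf{z}))]=0$ alone does not imply the bracket vanishes; the paper obtains it from the invariance requirement $\dot{\hat\theta}=0$ on the limit set, i.e. $\beta(\tau(\mathbf{z}))[\phi(\hat\theta^{\infty},\tau(\mathbf{z}))-\phi(\theta,\tau(\mathbf{z}))]=\mbox{dzv}(\hat\theta^{\infty})$, combined with $\mbox{dzv}(\hat\theta^{\infty})=0$, and only then applies Assumption \ref{ass-PE}.(ii).

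Second, your plan for upgrading attractivity to asymptotic stability of $\mathcal{A}_a$ leans on Assumption \ref{ass-LES}, which is not among the hypotheses of Lemma \ref{lemma--1} (it is reserved for the local exponential claim of Lemma \ref{lemma-2}), so you cannot use it here. The paper does not need it: it works in the nonequilibrium framework of \cite{Byrnes&Isidori(2003)}, where boundedness of all trajectories issued from $\mathcal{M}$ yields a nonempty, compact, invariant $\omega$-limit set $\omega(\mathcal{M})$ that uniformly attracts $\mathcal{M}$, and the structure analysis identifies $\omega(\mathcal{M})$ with $\mathcal{A}_a$; the stability ingredient is already supplied by the $\mathcal{KL}$ estimate of Assumption \ref{ass-MP} for $\mathcal{Z}_c$, the exponential decay of $\xi$, and the nonincreasing $V_{\tilde\theta}$ near the attractor. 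A minor bookkeeping point: the constants in (\ref{c-i}) are calibrated for the componentwise function $V_{\hat\theta,i}=\frac{1}{2}|\hat\theta_i|^2$ used in the paper; with your choice $\frac{1}{2}|\hat\theta-\theta|^2$ the domination claim for $|\hat\theta_i|$ just above $a_{0,i}+\epsilon_{0,i}$ does not follow verbatim from (\ref{c-i}) and would need its own constant estimate.
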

\begin{proof}
The proof is given in Appendix \ref{subsec-proof-1}.
\end{proof}

\begin{lemma}\label{lemma-2}
Suppose Assumptions \ref{ass-LES}, \ref{ass-immersion}, and \ref{ass-PE} hold. Then the set $\mathcal{A}_a$ under the flow (\ref{auxisys}) is locally exponentially stable.
\end{lemma}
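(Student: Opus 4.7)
The strategy is to upgrade the asymptotic convergence established in Lemma~\ref{lemma--1} to an exponential rate via a linearization-style analysis around the invariant set $\mathcal{A}_a$, now justified because Assumption~\ref{ass-LES} provides \emph{exponential} decay of $\mbox{dist}(\mathbf{z}(t),\mathcal{Z}_c)$ on the internal dynamics (\ref{sys-czd}). The $\xi$-subsystem of (\ref{auxisys}) is linear and decays at rate $\lambda$, and by Assumption~\ref{ass-LES}, $\mbox{dist}(\mathbf{z}(t),\mathcal{Z}_c)$ decays exponentially at rate $a$ once the initial distance is below $\delta_2$. Because $\tau$, $\bar\varsigma$, and $\nu$ are smooth and both $\bar\varsigma$ and $\nu$ vanish on $\mathcal{Z}_c$, the quantities $|\bar\varsigma(\mathbf{z}(t))|$ and $|\nu(\mathbf{z}(t))|$ inherit the same exponential decay, which is the key technical consequence of swapping Assumption~\ref{ass-MP} for Assumption~\ref{ass-LES}.

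Writing $\tilde\theta:=\hat\theta-\theta$ and Taylor-expanding the $\hat\theta$-equation of (\ref{auxisys}) around $(\mathbf{z}(t),0,\theta)$ for $\mathbf{z}(t)\in\mathcal{Z}_c$ then yields $\dot{\tilde\theta}=A(t)\tilde\theta+p(t)+R(t,\tilde\theta,\xi,\mathbf{z})$, where $A(t):=\beta(\tau(\mathbf{z}(t)))\,\partial_s\phi(\theta,\tau(\mathbf{z}(t)))$ is the nominal time-varying Jacobian along trajectories in $\mathcal{Z}_c$, $p(t)$ is a forcing signal decaying exponentially at rate $\min\{\lambda,a\}$, and $R$ is a higher-order remainder. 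Note that the dead-zone sits inside $R$: each $\mbox{dz}_i$ is $C^1$ with $\mbox{dz}_i(\theta_i)=0$ and $\mbox{dz}_i'(\theta_i)=0$ (since $|\theta_i|\le a_{0,i}$), so near $\hat\theta=\theta$ its contribution is $O(|\tilde\theta|^2)$. The crux is then to show uniform exponential stability of the nominal LTV equation $\dot{\tilde\theta}=A(t)\tilde\theta$. Specialising Assumption~\ref{ass-PE}(i) to $s_1=\theta+\tilde\theta$, $s_2=\theta$ yields $\tilde\theta^\top A(t)\tilde\theta\le 0$, so $|\tilde\theta|^2$ serves as a non-strict Lyapunov function. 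To convert this into exponential decay one invokes Assumption~\ref{ass-PE}(ii): its infinitesimal version asserts that no non-zero $v$ can satisfy $\partial_s\phi(\theta,\tau(\mathbf{z}(t)))v=0$ for all $t$. Compactness of $\mathcal{Z}_c$ together with continuity of the data then upgrade this injectivity into a uniform persistence-of-excitation bound of the form $\int_{t}^{t+T}\tilde\theta^\top A(s)^\top A(s)\tilde\theta\,ds\ge\alpha|\tilde\theta|^2$, whence a classical result on linear time-varying systems under persistent excitation delivers the desired UES.

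To close the proof, I would combine the UES of the nominal $\tilde\theta$-equation with the exponential decay of the $\mathbf{z}$- and $\xi$-subsystems through a composite Lyapunov function of the form $V=V_z(\mathbf{z})+c_1|\xi|^2+c_2 W(t,\tilde\theta)$, where $V_z$ is a local quadratic Lyapunov function around $\mathcal{Z}_c$ supplied by the converse Lyapunov theorem under Assumption~\ref{ass-LES} and $W$ is the Lyapunov function associated with the UES $\tilde\theta$-linearisation. A standard cascade/total-stability argument then absorbs the exponentially decaying forcing $p(t)$ and the higher-order remainder $R$ in a sufficiently small neighbourhood of $\mathcal{A}_a$. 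The main obstacle I anticipate is the persistence-of-excitation step: translating the nonlinear injectivity condition (\ref{PE-con}) into a uniform integrated lower bound on $A(s)^\top A(s)$ that holds across all base trajectories; this is precisely where forward invariance and compactness of $\mathcal{Z}_c$ (available from the setup of Lemma~\ref{lemma--1}) and the uniform smoothness bounds (\ref{bound-phi-beta}) play their essential role.
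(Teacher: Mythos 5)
Your overall reduction is essentially the paper's: exploit the cascade structure of (\ref{auxisys}), dispose of the $\mathbf{z}$- and $\xi$-parts via Assumption \ref{ass-LES} and the choice $\lambda>0$ (with the dead-zone correctly absorbed as a locally flat term near $\hat\theta=\theta$), and reduce everything to uniform exponential stability of the linear time-varying system $\dot{\tilde\theta}=A(t)\tilde\theta$ with $A(t)=\beta(\tau(\mathbf{z}(t,\mathbf{z}_0)))\,\frac{\partial\phi(\theta,\tau(\mathbf{z}(t,\mathbf{z}_0)))}{\partial\theta}$, $\mathbf{z}_0\in\mathcal{Z}_c$, using Assumption \ref{ass-PE}.(i) to obtain $\tilde\theta^{\top}A(t)\tilde\theta\leq0$; this matches (\ref{linear-sys-2}) in the paper. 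Where you diverge is the last step, and that is where the gap sits. You assert that an ``infinitesimal version'' of (\ref{PE-con}) holds, i.e.\ that no nonzero $v$ satisfies $\frac{\partial\phi(\theta,\tau(\mathbf{z}(t)))}{\partial\theta}v\equiv0$. This does not follow from Assumption \ref{ass-PE}.(ii): (\ref{PE-con}) is injectivity of the map $s\mapsto\phi(s,\tau(\mathbf{z}(\cdot)))$, and injectivity does not pass to the Jacobian at $\theta$. For example, with $q=1$, $\phi(s,r)=-s^{3}\psi(r)$, $\psi>0$, $\beta=\psi$, both parts of Assumption \ref{ass-PE} hold, yet $\frac{\partial\phi(\theta,r)}{\partial\theta}=0$ when $\theta=0$, so no bound of the form $\int_{t}^{t+T}\tilde\theta^{\top}A(s)^{\top}A(s)\tilde\theta\,ds\geq\alpha|\tilde\theta|^{2}$ can be extracted from (\ref{PE-con}) alone. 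Moreover, even granting such a bound, the ``classical result'' you invoke is stated for gradient-type laws $\dot{\tilde\theta}=-\Gamma\psi(t)\psi(t)^{\top}\tilde\theta$, i.e.\ symmetric positive-semidefinite excitation, whereas here $A(t)$ is a nonsymmetric rank-one matrix of which only the quadratic form is known to be nonpositive; transferring the PE-to-UES argument to this structure needs an additional step you do not provide.

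The paper's proof takes a route that avoids any quantitative excitation bound: since $\mathbf{z}(t,\mathbf{z}_0)$ solves the autonomous system (\ref{sys-czd}) and $\mathcal{Z}_c$ is compact and invariant, the LTV system (\ref{linear-sys-2}) is regarded as the $\tilde\theta$-part of the autonomous cascade (\ref{linear-sys}); with $V_{\tilde\theta}=\frac{1}{2}|\tilde\theta|^{2}$, Assumption \ref{ass-PE}.(i), LaSalle's invariance principle and Assumption \ref{ass-PE}.(ii) one obtains uniform asymptotic stability of $\mathcal{Z}_c\times\{0\}$, and then \emph{linearity} of the $\tilde\theta$-equation upgrades uniform asymptotic stability to exponential stability. (The nondegeneracy issue you are wrestling with is not entirely absent there either, since the LaSalle step uses Assumption \ref{ass-PE}.(ii) at the level of the Jacobian; but that argument at least never needs to convert (\ref{PE-con}) into a uniform integral lower bound.) Your closing composite-Lyapunov/cascade argument and the use of the global bounds (\ref{bound-phi-beta}) are fine and consistent with the paper; the part to repair is the persistence-of-excitation step, either by arguing as the paper does via LaSalle plus linearity, or by explicitly strengthening the hypothesis to a Jacobian-level excitation condition.
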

\begin{proof}
The proof is given in Appendix \ref{subsec-proof-2}.
\end{proof}

By setting $\tilde\theta=\hat\theta-\theta$ and letting $\mathbf{z}(t)$ denote the solution of system $\dot \mathbf{z} = \mathbf{f}(\mathbf{z})$ with initial condition ranging over $\mathcal{Z}$, system (\ref{auxisys}) can be rewritten as a \emph{nonautonomous system}
\beeq{\label{auxisys-na}\ba{l}
\dot {\xi} = -\lambda\xi\,\\
\dot {\tilde\theta} = \beta(\xi+\tau(\mathbf{z}(t)) - \bar\varsigma(\mathbf{z}(t)))\cdot \,\\ \cdot\left[\phi(\tilde\theta+\theta,\xi+\tau(\mathbf{z}(t))-\bar\varsigma(\mathbf{z}(t)))-\phi(\theta,\tau(\mathbf{z}(t)))\right] \,\\ +\beta(\xi+\tau(\mathbf{z}(t))- \bar\varsigma(\mathbf{z}(t)))\nu(\mathbf{z}(t)) - \mbox{dzv}(\tilde\theta+\theta)\,.\\
\ea}

With Lemma \ref{lemma--1}, and recalling Assumption \ref{ass-MP} and the fact that $\mathbf{z}(t)$ are captured by the compact set $\mathbf{Z}$, we can conclude the following result, whose proof can be obtained by simply adapting the proof of \cite[Theorem 3.1]{Serrani&Isidori&Marconi2001} to the present framework and is thus omitted.
\begin{corollary}
Suppose Assumptions \ref{ass-MP}, \ref{ass-immersion}, and \ref{ass-PE} hold. The zero equilibrium of nonautonomous system (\ref{auxisys-na}) is uniformly asymptotically stable, for all $\mathbf{z}_0\in\mathcal{Z}$.
\end{corollary}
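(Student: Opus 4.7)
The plan is to leverage Lemma \ref{lemma--1} by viewing (\ref{auxisys-na}) as the $(\xi,\tilde\theta)$-projection of the autonomous flow (\ref{auxisys}) written in the shifted coordinate $\tilde\theta=\hat\theta-\theta$, with the $\mathbf{z}$-component slaved to its own autonomous trajectory from $\mathbf{z}_0\in\mathcal{Z}$. Under this identification, any solution of (\ref{auxisys-na}) initialised at $(\xi_0,\tilde\theta_0)$ at time $t_0$ is exactly the projection of the solution of (\ref{auxisys}) starting at time $t_0$ from the point $(\mathbf{z}(t_0,\mathbf{z}_0),\xi_0,\tilde\theta_0+\theta)$, which lies in $\mathbf{Z}\times\mathbb{R}^d\times\mathbb{R}^q\subset\mathcal{M}$ by Assumption \ref{ass-MP}.

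First I would apply Lemma \ref{lemma--1} to extract a class-$\mathcal{KL}$ function $\delta_a$ such that
\[
\mbox{dist}\bigl((\mathbf{z}(t),\xi(t),\hat\theta(t)),\mathcal{A}_a\bigr)\leq\delta_a\bigl(\mbox{dist}\bigl((\mathbf{z}_0,\xi_0,\hat\theta_0),\mathcal{A}_a\bigr),t\bigr)
\]
for all initial conditions in $\mathcal{M}$; this $\mathcal{KL}$ characterisation is guaranteed by the autonomy of (\ref{auxisys}) together with asymptotic stability of the compact set $\mathcal{A}_a$ from $\mathcal{M}$. Since $\mbox{dist}((\mathbf{z},\xi,\hat\theta),\mathcal{A}_a)$ dominates both $|\xi|$ and $|\hat\theta-\theta|$, this estimate already delivers convergence of $(\xi(t),\tilde\theta(t))$ to zero for every $\mathbf{z}_0\in\mathcal{Z}$.

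Next, to promote this to uniform asymptotic stability, I would split the time axis into a transient phase and a steady phase. On the transient phase, Assumption \ref{ass-MP} drives $\mathbf{z}(t,\mathbf{z}_0)$ into any prescribed neighbourhood of $\mathcal{Z}_c$ within a time bounded uniformly in $\mathbf{z}_0\in\mathcal{Z}$; meanwhile $\xi$ decays as $e^{-\lambda(t-t_0)}$, and $\tilde\theta$ has at most linear growth thanks to the uniform bounds (\ref{bound-phi-beta}), (\ref{bound-nu}) and the dissipativity of the dead-zone $\mbox{dzv}(\cdot)$. On the steady phase, the forcing terms $\bar\varsigma(\mathbf{z}(t))$ and $\nu(\mathbf{z}(t))$ are arbitrarily small, so the dynamics is close to the ``frozen'' system on $\mathcal{Z}_c$ for which the origin of $(\xi,\tilde\theta)$ is a true equilibrium and the Lyapunov construction of Lemma \ref{lemma--1} (built from the monotonic-like structure of Assumption \ref{ass-PE}) supplies a class-$\mathcal{KL}$ bound. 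Uniformity in $t_0$ is automatic since the underlying system (\ref{auxisys}) is time-invariant.

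The principal obstacle is reconciling the set stability of $\mathcal{A}_a$ (whose natural ``initial distance'' includes a $\mbox{dist}(\mathbf{z}_0,\mathcal{Z}_c)$ contribution) with uniform origin stability of (\ref{auxisys-na}) (which should be controlled by $|(\xi_0,\tilde\theta_0)|$ alone); the $\mathbf{z}_0$-dependent piece is a bounded, persistent perturbation of the $(\xi,\tilde\theta)$-subsystem that must be shown to vanish in the Lyapunov estimate rather than merely be absorbed. This is resolved by treating $\mathbf{z}(t,\mathbf{z}_0)$ as a bounded exogenous signal whose distance to $\mathcal{Z}_c$ decays uniformly and by revisiting the dissipation inequality of Lemma \ref{lemma--1} along the trajectory, essentially repeating the argument of \cite[Theorem 3.1]{Serrani&Isidori&Marconi2001} in the present nonequilibrium setting.
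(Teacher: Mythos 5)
Your proposal takes essentially the same route as the paper: the paper omits the proof, saying only that it follows by adapting \cite[Theorem 3.1]{Serrani&Isidori&Marconi2001} to the present framework on the basis of Lemma \ref{lemma--1}, Assumption \ref{ass-MP}, and the confinement of $\mathbf{z}(t)$ to the compact set $\mathbf{Z}$, which is exactly the scheme you describe (set stability of $\mathcal{A}_a$ for the autonomous flow, with $\mathbf{z}(t,\mathbf{z}_0)$ treated as an exogenous signal decaying to $\mathcal{Z}_c$ so that the forcing terms $\bar\varsigma$ and $\nu$ vanish asymptotically). Your transient/steady-phase elaboration and the explicit identification of the $t_0$-uniformity issue are consistent with that intended adaptation, so I see no gap.
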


By letting $\mathbf{x}_a = \col(\xi,\tilde\theta)$, system (\ref{auxisys-na}) can be compactly rewritten as
\beeq{
\dot \mathbf{x}_a = \mathbf{f}_a(\mathbf{z}(t),\mathbf{x}_a)
}
where $\mathbf{f}_a(\mathbf{z}(t),\mathbf{x}_a)$ is continuously differentiable. It is worth noting that by constructing functions $\beta(\cdot)$ and $\phi(\cdot,\cdot)$ to be globally bounded and Lipschitz, and since $\mathbf{z}(t)\in\mathbf{Z}$ for all $t\geq0$, there exists a $\varpi_0>0$ such that
\[
\left\|\frac{\partial\mathbf{f}_a(\mathbf{z}(t),\mathbf{x}_a)}{\partial \mathbf{x}_a}\right\| \leq \varpi_0\,.
\]
According to \cite[Theorem 4.16]{Khalil(2002)}, this property, together with Lemma \ref{lemma--1}, indicates that there exist a smooth, positive definite function $W_a(t,\mathbf{x}_a)$, and class $\mathcal{K}_{\infty}$ functions $\alpha_1,\alpha_2$, $\alpha_3$, and $\alpha_4$ such that
\beeq{\label{W-a}\ba{c}
\alpha_1(|\mathbf{x}_a|) \leq W_a(t,\mathbf{x}_a) \leq \alpha_2(|\mathbf{x}_a|)\,\\
\dst\frac{\partial W_a}{\partial t} + \frac{\partial W_a}{\partial \mathbf{x}_a}\dot \mathbf{x}_a\leq -\alpha_3(|\mathbf{x}_a|)\,\\
\dst\left|\frac{\partial W_a}{\partial \mathbf{x}_a}\right| \leq \alpha_4(|\mathbf{x}_a|)\,.
\ea}

With this in mind, we turn to system (\ref{sup-augsys-2}) and  define the rescaled estimate errors as
\beeq{\label{e}
\tilde{\xi} = \ell^{d+1}\Lambda_{\ell}^{-1}(\xi-\hat\xi)\,,\quad
\tilde\sigma = \sigma - \hat\sigma\,.
}
Taking time derivatives of these errors along (\ref{sup-augsys-2}) yields
\beeq{\ba{rcl}
\dot{\tilde\xi} &=& \ell(A_d-G\,C_d)\tilde\xi + \ell\,B_d\tilde\sigma\,\\
\ea}
and
\beeq{\ba{l}
\dot{\tilde\sigma} = -\ell g_{d+1}\tilde\xi_1 + \dot\phi(\hat\theta,\xi+\tau(\mathbf{z})-\bar\varsigma(\mathbf{z}))-\dot\phi(\theta,\tau(\mathbf{z}))\,\\
 \quad = -\ell g_{d+1}\tilde\xi_1 + \Delta_e\,
\ea}
where the term $\Delta_e$ is defined by
\beeq{\label{Delta-e}\ba{l}
\Delta_e = \dst\frac{\partial\phi(\hat\theta,\xi+\tau)}{\partial\hat\theta}\dot {\hat\theta}+ \frac{\partial\phi(\hat\theta,\xi+\tau)}{\partial\xi}\dot\xi  \\ \qquad \dst +  \left[\frac{\partial\phi(\hat\theta,\xi+\tau-\bar\varsigma)}{\partial\tau} - \frac{\partial\phi(\theta,\tau)}{\partial\tau}\right]\dot\tau(\mathbf{z}) \\ \qquad \dst - \frac{\partial\phi(\hat\theta,\xi+\tau-\bar\varsigma)}{\partial\bar\varsigma} \dot{\bar\varsigma}(\mathbf{z})\,.
\ea}
It is worth noting that $\Delta_e=0$ for all $(\mathbf{z},\mathbf{x}_a)\in\mathcal{A}_a$ and $e=0$, and due to the presence of saturation functions, $|\Delta_e|$ is bounded for all bounded $(\mathbf{z},\mathbf{x}_a)$, uniformly in $(\tilde\xi,\tilde\sigma)$.

Putting these equations together and letting $e=\col(\tilde\xi,\tilde\sigma)$, we can compactly obtain
\beeq{
\dot e = \ell F_e e + B_{d+1}\Delta_e
}
where $F_e$ is  defined by
\[\ba{l}
F_e = \qmx{-G & I_d  \cr  -g_{d+1} & 0}\,\\
\ea\]
This allows us to rewrite (\ref{sup-augsys-2}) as
\beeq{\label{compact-zd}\ba{rcl}
\dot \mathbf{z} &=& \mathbf{f}(\mathbf{z})\,\\
\dot \mathbf{x}_a &=& \mathbf{f}_a(\mathbf{z},\mathbf{x}_a) + \Xi(\mathbf{z}(t),\mathbf{x}_a,e)\,\\
\dot e &=& \ell F_e e + B_{d+1}\Delta_e\,.
\ea}

Thus, given any compact set $\mathcal{C}_{\mathbf{x}}\in\mathbb{R}^{p+d}$, choose $c$ such that $\mathcal{A}_c\supset\mathcal{C}_{\mathbf{x}}$ with
\[
\mathcal{A}_c=\{\mathbf{x}_a: \alpha_1(|\mathbf{x}_a|)\leq c\}\,,
\]
and let
\[
\Omega_{c+1}=\{\mathbf{x}_a: \alpha_1(|\mathbf{x}_a|)\leq \max_{\mathbf{x}_a\in\mathcal{A}_c}\alpha_2(|\mathbf{x}_a|)+1\}\,.
\]
It is clear that $\mathcal{A}_c\subset\Omega_{c+1}$.
Then, choose the saturation levels as
\beeq{\label{sl}\ba{l}
\dst l_i = \max_{\mathbf{x}_a\in\Omega_{c+1}}|\lambda \xi_i+\xi_{i+1}| + 1\,,\quad 1\leq i\leq d-1\,\\
\dst l_{d} = \max_{\mathbf{x}_a\in\Omega_{c+1}}|\lambda \xi_d| + 1\,\\
\dst l_{d+1} = \max_{(\mathbf{z},\mathbf{x}_a)\in\mathbf{Z}\times\Omega_{c+1}}\left|\phi(\hat\theta,\varphi_\eta(\xi+\tau(\mathbf{z})- \bar\varsigma(\mathbf{z}(t)))) \right.\, \\ \qquad \qquad \left.-\phi(\theta,\tau(\mathbf{z}))\right|+1\,.
\ea}
With the above choice of $l_i$'s, it can be observed that for all $(\mathbf{z},\mathbf{x}_a)\in\mathbf{Z}\times\Omega_{c+1}$, $\Xi(\mathbf{z},\mathbf{x}_a,e)$ is bounded  uniformly in $e$, and $\Xi(\mathbf{z},\mathbf{x}_a,0)=0$.

Therefore, from the standard arguments of nonlinear separation principles \cite{Isidori(1999)}, semiglobal asymptotic stability of the closed-loop system (\ref{sup-augsys-2}) can be easily concluded as below.
\begin{proposition}\label{pro-1}
Suppose Assumptions \ref{ass-MP}, \ref{ass-LES} and \ref{ass-PE} hold. Given any compact sets $\mathcal{C}_{\mathbf{x}}\in\mathbb{R}^{q+d}$ and $\mathcal{C}_e\in\mathbb{R}^{d+1}$,
and choosing $g_i$'s such that matrix $F_e$ is Hurwitz, there exists $\ell^\ast>1$ such that for all $\ell\geq \ell^\ast$ the set
\[
\{(\mathbf{z},\hat\theta,\xi,\hat\xi,\hat\sigma):\mathbf{z}\in\mathcal{Z}_c,\xi=0,\hat\theta=\theta,\hat\xi=0,\hat\sigma=0\}\,
\]
under the flow (\ref{sup-augsys-2}) is locally exponentially stable, and asymptotically stable for all initial conditions in $\mathcal{Z}\times \mathcal{C}_{\mathbf{x}}\times\mathcal{C}_e$.
\end{proposition}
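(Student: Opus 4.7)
The plan is to apply the standard semiglobal nonlinear separation principle to the cascade (\ref{compact-zd}), treating $(\mathbf{z}, \mathbf{x}_a)$ as a slow subsystem driven by the fast observer error $e$. When $e = 0$, the slow dynamics reduce to $\dot{\mathbf{z}} = \mathbf{f}(\mathbf{z})$ coupled to (\ref{auxisys}), whose attractor is asymptotically stable by Lemma \ref{lemma--1} and locally exponentially stable by Lemma \ref{lemma-2}, and the converse-Lyapunov function $W_a$ of (\ref{W-a}) is already in hand. On the fast side, $F_e$ is Hurwitz by the choice of $g_i$, so there exists $P = P^\top > 0$ with $P F_e + F_e^\top P = -I$; the driving term $\Delta_e$ in (\ref{Delta-e}) is bounded by a constant $M_\Delta$ uniformly in $\ell$ and in $e$ whenever $(\mathbf{z}, \mathbf{x}_a)\in \mathbf{Z}\times\Omega_{c+1}$, since the saturations $\mbox{satv}(\cdot)$, $\mbox{sat}_{d+1}(\cdot)$ and $\mbox{dzv}(\cdot)$ built into (\ref{AIM}) cap the factors $\dot\xi$, $\dot{\hat\theta}$ and $\dot{\hat\xi}$ that appear in the formula for $\Delta_e$.

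I would then analyse the composite Lyapunov function $V(t,\mathbf{x}_a,e) = W_a(t,\mathbf{x}_a) + \mu\, e^\top P e$ with a small $\mu > 0$. Using $\Xi(\mathbf{z}, \mathbf{x}_a, 0) = 0$ together with the Lipschitzness of $\Xi$ in $e$ on $\mathbf{Z}\times\Omega_{c+1}$, the standard estimate
\[
\dot V \;\leq\; -\alpha_3(|\mathbf{x}_a|) + \alpha_4(|\mathbf{x}_a|)\,|\Xi| - \mu\ell\,|e|^2 + 2\mu\,\|PB_{d+1}\|\,|e|\,M_\Delta
\]
is obtained. Young's inequality absorbs the cross-terms into the two negative-definite parts, yielding $\dot V \leq -\frac{1}{2}\alpha_3(|\mathbf{x}_a|) - \frac{1}{2}\mu\ell\,|e|^2$ outside an arbitrarily small neighbourhood of the target set, provided $\ell$ is taken large enough as a function of $\mu$ and of the chosen compact set $\mathcal{C}_{\mathbf{x}}\times\mathcal{C}_e$.

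The main technical obstacle is the high-gain peaking phenomenon: for large $\ell$ and $e(0)$ fixed in $\mathcal{C}_e$, the transient of $e$ is short and large, and without the saturations the induced perturbation $\Xi$ would eject $\mathbf{x}_a$ from $\Omega_{c+1}$ before the Lyapunov estimate could take effect. The saturations in (\ref{AIM}) are precisely designed so that the right-hand side of the $\mathbf{x}_a$-equation in (\ref{compact-zd}) is bounded \emph{uniformly} in $\ell$ and in $e$; hence $\mathbf{x}_a(t)$ cannot leave $\Omega_{c+1}$ during the $O(1/\ell)$ peaking window, after which $e(t)$ is confined to an $O(1/\ell)$ ball, the saturations become inactive on $\mathbf{Z}\times\Omega_{c+1}$ by (\ref{sl}), and (\ref{compact-zd}) agrees with its unsaturated counterpart. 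Asymptotic convergence to the stated set then follows from the composite estimate together with an invariance argument: since $\Xi\to 0$ as $e\to 0$, every $\omega$-limit trajectory of the slow subsystem solves (\ref{auxisys}), and hence lies in $\mathcal{Z}_c\times\mathcal{A}_a$ by Lemma \ref{lemma--1}. Finally, local exponential stability of the closed set is obtained by repeating the composite-Lyapunov argument with a quadratic $W_a$ supplied by Lemma \ref{lemma-2}, producing a block-triangular Hurwitz linearisation whose slow eigenvalues are those of the auxiliary system and whose fast eigenvalues scale with $\ell$.
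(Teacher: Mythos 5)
Your proposal is correct and follows essentially the same route the paper intends: the paper sets up exactly the ingredients you use (the converse Lyapunov function $W_a$ in (\ref{W-a}), the rescaled error dynamics with Hurwitz $F_e$, the uniform bound on $\Delta_e$ and the saturation levels (\ref{sl}) guaranteeing $\Xi$ bounded uniformly in $e$ with $\Xi(\mathbf{z},\mathbf{x}_a,0)=0$) and then simply invokes the standard semiglobal separation-principle argument of \cite{Isidori(1999)}, which is precisely the composite-Lyapunov, peaking-protection, and local-exponential-stability (via Lemma \ref{lemma-2}) argument you spell out. Your write-up in fact supplies more detail than the paper does, and no step conflicts with the paper's construction.
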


\subsection{Adaptive Output Regulation}

We now turn to the extended system (\ref{sys-aug}).
As mentioned before, this system, viewed as a system with input $v_u$ and output $y_e=x$, has relative degree one.
By taking the change of variables
\[\ba{rcl}
\dst\check{\xi}&:=&\hat\xi+\Lambda_{\ell}G \dst\int_0^{x}\frac{1}{b(\rho,w,z,s)}ds\,\\
\dst\check{\sigma}&:=&\hat\sigma+ \ell^{d+1}g_{d+1}\dst\int_0^{x}\frac{1}{b(\rho,w,z,s)}ds\,\\
\ea\]
system (\ref{sys-aug}) can be rewritten  in ``normal form" as
\beeq{\label{supsys-2}\ba{l}
\dot\rho = 0\,\\
\dot w = s(\rho,w)\,\\
\dot z = f_0(\rho,w,z) + f_1(\rho,w,z,x)x\,\\
\dst\dot \eta = A_d\eta + B_d\phi(\hat\theta,\eta)-\mbox{satv}((A_d+\lambda I){\check{\xi}})\,\\ \dst
    \,\quad\, -B_d\mbox{sat}_{{d+1}}(\hat\sigma) + \mu_1(\rho,w,z,\hat\theta,\eta,\check{\xi},\check{\sigma},x)x\,\\
\dot {\hat\theta} = \beta(\eta)\mbox{sat}_{{d+1}}(\check{\sigma}) + \mu_2(\rho,w,z,\hat\theta,\eta,\check{\xi},\check{\sigma},x)x\,\\
\dst\dot {\check{\xi}} = A_d\check{\xi} +B_d\check{\sigma}-\mbox{satv}((A_d+\lambda I){\check{\xi}}) \dst-B_d\mbox{sat}_{{d+1}}(\check{\sigma})\,\\ \,\dst + \Lambda_\ell  G\left(C_d\eta+\frac{q(\rho,w,z,0)}{b(\rho,w,z,0)}-\hat\xi_{1}\right) + \mu_3(\rho,w,z,\hat\theta,\eta,\check{\xi},\check{\sigma},x)x\,\\
\dst\dot {\check{\sigma}} = \ell^{d+1} g_{d+1}\left(C_d\eta+\frac{q(\rho,w,z,0)}{b(\rho,w,z,0)}-\check{\xi}_{1}\right)\,\\ \qquad + \mu_4(\rho,w,z,\hat\theta,\eta,\check{\xi},\check{\sigma},x)x\,\\
\dst\dot x = q(\rho,w,z,x)- b(\rho,w,z,x)\frac{q(\rho,w,z,0)}{b(\rho,w,z,0)} \\ \dst\qquad + b(\rho,w,z,x)\left(C_d\eta+\frac{q(\rho,w,z,0)}{b(\rho,w,z,0)}\right)+ b(\rho,w,z,x)v_u\,\\
\ea}
in which $\mu_i(\cdot)$, $i=1,\ldots,4$ are continuous functions.

Bearing in mind the results in Proposition \ref{pro-1} and recalling \cite[Proposition 4]{Priscoli&Marconi&Isidori(2006)}, we can choose $v_u$ for system (\ref{supsys-2}) as the form
\beeq{\label{v-u}
v_u=-\kappa x\,,
}
and the following conclusion can be easily made.
\begin{proposition}
Consider system (\ref{inisys}) with exosystem (\ref{exosys}) and controller (\ref{controller-xc}) having the form (\ref{AIM}) and (\ref{v-u}). Suppose Assumptions \ref{ass-MP}-\ref{ass-PE} hold. Given any compact sets $\mathcal{C}_z\subset\mathbb{R}^n$, $\mathcal{C}_x\subset\mathbb{R}$ and $\mathcal{C}_{x_c}\subset\mathbb{R}^{2d+q+1}$, and choosing $g_i$'s such that matrix $F_e$ is Hurwitz, there exist $\ell^\ast>1$ and a positive function $\kappa^\ast(\cdot)$ such that for all $\ell>\ell^\ast$ and $\kappa\geq\kappa^\ast(\ell)$, the resulting trajectories of the closed-loop system are bounded and $x(t)\rightarrow 0$ as $t\rightarrow \infty$, with the domain of attraction that contains $\mathcal{C}_z\times\mathcal{C}_x\times\mathcal{C}_{x_c}$.
\end{proposition}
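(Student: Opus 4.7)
The plan is to cast the closed-loop system into the normal-form representation (\ref{supsys-2}) and then invoke a nonlinear separation principle in the spirit of \cite[Proposition 4]{Priscoli&Marconi&Isidori(2006)}, reducing the output regulation problem to the asymptotic stability of the extended zero dynamics (\ref{sup-augsys}) --- which is exactly what Proposition \ref{pro-1} already provides. In the coordinates $(\rho,w,z,\eta,\hat\theta,\check\xi,\check\sigma,x)$, the closed-loop system has the cascade structure in which the $x$-subsystem is driven by the internal-model states through $\dot x = \cdots + b(\cdot)(C_d\eta + v_u)$, while the internal-model states are perturbed by $x$ only via the continuous terms $\mu_i(\cdot)x$, $i=1,\ldots,4$. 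Consequently, setting $x\equiv 0$ collapses the internal subsystem exactly onto (\ref{sup-augsys}).

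First I would fix the $g_i$ so that $F_e$ is Hurwitz and apply Proposition \ref{pro-1} over compact sets $\mathcal{C}_{\mathbf{x}}\times\mathcal{C}_e$ chosen large enough to contain the projection of $\mathcal{C}_{x_c}$. This yields $\ell^\ast>1$ such that, for $\ell\geq\ell^\ast$, the reduced (zero) system admits $\mathcal{A}^\ast := \{\mathbf{z}\in\mathcal{Z}_c,\,\xi=0,\,\hat\theta=\theta,\,\hat\xi=0,\,\hat\sigma=0\}$ as a locally exponentially stable, semiglobally attractive compact set. Combining the Lyapunov function $W_a(t,\mathbf{x}_a)$ in (\ref{W-a}) with the standard high-gain observer estimate for the rescaled error $e$ (which is $O(1/\ell)$ in steady state, thanks to $F_e$ being Hurwitz and $\Delta_e$ being bounded on compact sets), I would obtain a converse-Lyapunov style local ISS description of the internal subsystem in (\ref{supsys-2}) with respect to the ``input'' $x$, valid on any prescribed compact level set.

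Next I would use the bound $b(\rho,w,z,x)\geq b_0>0$ from (\ref{ineq-b}) to rewrite the $x$-equation in (\ref{supsys-2}) as $\dot x = \Delta_x(\rho,w,z,\eta,x) - \kappa\,b(\rho,w,z,x)\,x$, where $\Delta_x$ is continuous and, on the relevant compact set, uniformly bounded and vanishing on $\mathcal{A}^\ast$. This guarantees that for sufficiently large $\kappa$ the $x$-subsystem is exponentially ISS with respect to the internal-model states, with an arbitrarily small asymptotic gain. A small-gain composition with the ISS estimate from the previous paragraph then shows that, for any fixed $\ell\geq\ell^\ast$, a threshold $\kappa^\ast(\ell)$ can be produced so that for every $\kappa\geq\kappa^\ast(\ell)$ every trajectory starting in $\mathcal{C}_z\times\mathcal{C}_x\times\mathcal{C}_{x_c}$ remains bounded and converges to $\mathcal{A}^\ast\times\{x=0\}$; in particular $x(t)\to 0$.

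The hard part is coordinating the two ``large parameter'' choices over the \emph{same} semiglobal compact set of initial conditions: the ISS gain supplied by the internal subsystem depends on $\ell$ through the saturation levels (\ref{sl}) and the observer error bound, while the attenuation in the $x$-subsystem must dominate the $\ell$-dependent cross-terms hidden in the $\mu_i$. The sequential choice ``$\ell$ first, then $\kappa^\ast(\ell)$'' stated in the Proposition is exactly the one imposed by this coupling, and it fits the template of \cite[Proposition 4]{Priscoli&Marconi&Isidori(2006)}; once the level sets associated with $\mathcal{C}_z\times\mathcal{C}_x\times\mathcal{C}_{x_c}$ are fixed, what remains is a bookkeeping exercise verifying that the $\mu_i x$ perturbations do not push trajectories out of the invariant level set selected in the first step.
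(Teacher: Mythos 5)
Your proposal is correct and follows essentially the same route as the paper: reduce the closed loop to the normal form (\ref{supsys-2}) with relative degree one and $b\geq b_0$, invoke Proposition \ref{pro-1} for the (locally exponentially, semiglobally asymptotically stable) extended zero dynamics, and then stabilize with the high-gain residual control $v_u=-\kappa x$, choosing $\ell$ first and $\kappa^\ast(\ell)$ afterwards. The paper simply defers this last step to \cite[Proposition 4]{Priscoli&Marconi&Isidori(2006)}, whereas you spell out the underlying ISS/small-gain bookkeeping, which is a faithful reconstruction of that cited argument rather than a different proof.
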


\section{An illustrative Example}
\label{sec-example}

Consider the output regulation problem for the nonlinear system
\beeq{\label{example-sys}\ba{rcl}
\dot \zeta_1 &=& \rho\zeta_1 - (\zeta_1+w_1)^3+w_2+\zeta_2\,\\
\dot \zeta_2 &=& \zeta_3\,\\
\dot \zeta_3 &=& -w_1 + \zeta_1\zeta_2 + u\,\\
y_e &=& \zeta_1
\ea}
in which $(\zeta_2,\zeta_3)$ are measurable states, and  the exogenous variables $w_1,w_2$ are generated by an uncertain nonlinear oscillator
\beeq{\label{example-exosys}\ba{rcl}
\dot w_1 &=& w_2\,\\
\dot w_2 &=& -w_1 + (1-w_1^2)\dst\frac{w_2}{1+\rho w_1}
\ea}
where $\rho$ is a constant unknown parameter satisfying $\rho \in[-0.2,0.2]$. The trajectories of (\ref{example-exosys}) at each $\rho\in\{-0.2,0,0.2\}$ are given in Fig.\ref{fig1}. It can be seen that for any $\rho \in[-0.2,0.2]$ there exists a limit cycle, that is an invariant set $\mathcal{W}$ for (\ref{example-exosys}), and particularly $\mathcal{W}\subset\{(w_1,w_2): |w_i|\leq 3, i=1,2\}$.

\begin{figure}[thpb]
\begin{center}
\centering\includegraphics[height=35mm,width=80mm]{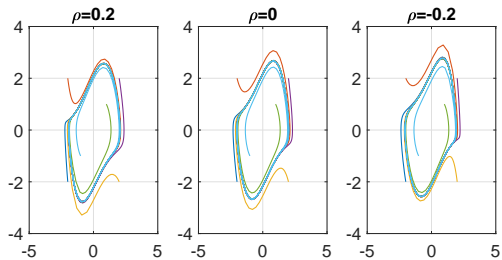} \caption{Phase portrait of (\ref{example-exosys}) at $\rho=0.2$,  $\rho=0$ and  $\rho=-0.2$.}
\label{fig1}
\end{center}
\end{figure}

Note that, when $w_1=w_2\equiv0$, system (\ref{example-sys}), regarded as a system with input $u$ and output $y_e$, has relative degree $2$ and a zero dynamics as $\dot \zeta_1 = \rho\zeta_1 - \zeta_1^3$, whose zero equilibrium point is unstable when $\rho>0$ and stable when $\rho\leq0$.  Thus, the conventional methods \cite{Huang2004,Serrani&Isidori&Marconi2001} based on equilibrium theory cannot be applied.

Following the design paradigm proposed in this paper, we first set $z_1=\zeta_1$, $z_2=\zeta_2$ and $x=\zeta_2+\zeta_3$, which reduces the relative degree of system (\ref{example-sys}) to one, leading to the form
\beeq{\label{example-sys-2}\ba{rcl}
\dot z_1 &=& \rho z_1 - (z_1+w_1)^3+w_2+z_2\,\\
\dot z_2 &=& -z_2 + x\,\\
\dot x &=& -w_1 -z_2 + z_1z_2+ x + u\,.\\
\ea}
The zero dynamics of system (\ref{example-exosys})-(\ref{example-sys-2}) with respect to input $u$ and output $x$, forced by the control input $u=w_1+z_2-z_1z_2$, can be described as
\[
\ba{rcl}
\dot \rho  &=& 0\\
\dot w_1 &=& w_2\,\\
\dot w_2 &=& -w_1 + (1-w_1^2)\dst\frac{w_2}{1+\rho w_1}\\
\dot z_1 &=& \rho z_1 - (z_1+w_1)^3+w_2+z_2\,\\
\dot z_2 &=& -z_2\,.
\ea
\]
Then, by some simple calculations, it can be seen that Assumptions \ref{ass-MP} and \ref{ass-LES} are fulfilled for some $\omega$-limit set on which $z_2=0$. In view of this, we proceed to verify Assumptions \ref{ass-immersion} and \ref{ass-PE}. Observe that in the present setting,  Assumption \ref{ass-immersion} is fulfilled with the map $\tau:=(\tau_1,\tau_2)=(w_1,w_2)$ satisfying the equations
\[\ba{rcl}
\dot\tau_1 = \tau_2\,,\qquad
\dot\tau_2 = \phi(\theta,\tau) \,\\
\ea\]
where $\theta=\rho$ and function $\phi(\theta,\tau)=-\varphi_s(\tau_1)+(1-\varphi_s^2(\tau_1))\dst\frac{\varphi_s(\tau_2)}{1+\varphi_s(\theta)\varphi_s(\tau_1)}$ with
\[\ba{l}
\varphi_s(\tau_i) = \tau_i\,,\quad \mbox{for } |\tau_i|\leq 3\,\\
\varphi_s(\theta) = \theta\,,\quad \mbox{for } |\theta|\leq 0.2\,.
\ea\]
Moreover, by choosing $\beta(\tau)=(1-\varphi_s^2(\tau_1))\varphi_s(\tau_1)\varphi_s(\tau_2)$, it can be easily found that the function $\beta(\tau)\phi(\theta,\tau)$ is strictly decreasing in $|\theta|\leq 0.25$, for all $\tau\in\mathcal{W}$. In this way, Assumption \ref{ass-PE} is also fulfilled.

Therefore, the adaptive internal model-based regulator (\ref{AIM}) and (\ref{u}) can be employed to handle the nonlinear output regulation problem at hand. Figure \ref{fig2} shows simulation results for $\rho=0.2$, and the design parameters $\ell=10$ and $\kappa=30$. It demonstrates that the regulated output $y_e$ converges to zero asymptotically and the parameter estimate $\hat\theta$ converges to the real value.

\begin{figure}[thpb]
\begin{center}
\centering\includegraphics[height=35mm,width=50mm]{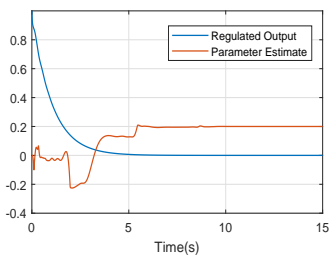} \caption{Trajectories of regulated output $y_e(t)$ and parameter estimate $\hat\theta(t)$.}
\label{fig2}
\end{center}
\end{figure}

\section{Conclusion}
\label{sec-con}

This paper studies the adaptive output regulation problem for a class of nonlinear systems using the general nonequilibrium theory developed in \cite{Byrnes&Isidori(2003)}.  By incorporating an extended-state observer into the  adaptive internal model, a new approach is proposed to deal with adaptive nonlinear regulation, which allows for more general nonlinearly parameterized immersion conditions.

\appendix

\section{Proof}
\subsection{Proof of Lemma 1}
\label{subsec-proof-1}

The proof mainly follows the nonequilibrium theory developed in \cite{Byrnes&Isidori(2003)}. First, we will show that the trajectories of system (\ref{auxisys}) are bounded, i.e. there is no finite-time escape. By Assumption \ref{ass-MP} and the choice of $\lambda>0$, it can be easily seen that $\mathbf{z}(t)$ and $\xi(t)$ are bounded. To show $\hat\theta(t)$ is also bounded, we let $\hat\theta_i$ denote the $i$-th element of vector $\hat\theta$ and choose $V_{\hat\theta,i}={1\over2}|\hat\theta_i|^2$, $i=1,\ldots,p$. Taking the time derivative of $V_{\hat\theta,i}$ along the bottom equation of (\ref{auxisys}) yields that
\[\ba{rcl}
\dot V_{\hat\theta,i} &=& \hat\theta_i\beta_i[\phi(\hat\theta,\xi+\tau(\mathbf{z})-\bar\varsigma(\mathbf{z}))-\phi(\theta,\tau(\mathbf{z}))] \,\\ &&\qquad +\hat\theta_i\beta_i\nu(\mathbf{z}) - \hat\theta_i\mbox{dz}_{i}(\hat\theta_i)\,\\
&\leq& -\hat\theta_i\mbox{dz}_{i}(\hat\theta_i) + (2a_1a_{2,i}+a_{2,i}a_3)|\hat\theta_i|
\ea\]
where (\ref{bound-phi-beta}) and (\ref{bound-nu}) are used to get the inequality.

If $|\hat\theta_i|\geq a_{0,i}+\epsilon_{0,i}$, then
\[\ba{rcl}
\dot V_{\hat\theta,i} &\leq& -c_i\hat\theta_i[\hat\theta_i- (a_{0,i}+\frac{\epsilon_{0,i}}{2})] + (2a_1a_{2,i}+a_{2,i}a_3)|\hat\theta_i|\,\\
&\leq& -\frac{\epsilon_{0,i}}{2}(c_i - \frac{4a_1a_{2,i}+2a_{2,i}a_3}{\epsilon_{0,i}})|\hat\theta_i|\,.
\ea\]
From (\ref{c-i}), we can conclude that $\dot V_{\hat\theta,i}<0$ for all $|\hat\theta_i|\geq a_{0,i}+\epsilon_{0,i}$ with $i=1,\ldots,d$.
This then indicates that in the presence of dead-zone functions $\mbox{dzv}(\hat\theta)$, the trajectory $\hat\theta(t)$ of (\ref{auxisys}) is globally uniformly bounded, and will enter and stay inside the closed cube $\mathcal{B}_{0}^q$.

With the boundedness of trajectories of system (\ref{auxisys}), it thus can be deduced that there exists an $\omega$-limit set, denoted by $\omega(\mathcal{M})$, of $\mathcal{M}=\mathcal{Z}\times\mathbb{R}^d\times\mathbb{R}^q$ under the flow of (\ref{auxisys}), which is nonempty, compact and invariant, and  uniformly attracts all trajectories of (\ref{auxisys}) with initial conditions in $\mathcal{M}$.

Now we proceed to investigate the structure of this $\omega$-limit set $\omega(\mathcal{M})$. Due to the special triangular structure of (\ref{auxisys}), and by Assumption \ref{ass-MP} and the fact that the $\xi$-subsystem is globally exponentially stable at the origin, it immediately follows that on the points of $\omega(\mathcal{M})$, necessarily $\mathbf{z}\in\mathcal{Z}_c$ and $\xi=0$. As a consequence, on the $\omega$-limit set $\omega(\mathcal{M})$, $\bar\varsigma(\mathbf{z})=0$ and $\nu(\mathbf{z})=0$. In view of the previous analysis, to specify the structure of $\omega(\mathcal{M})$, we still need to determine the value of $\hat\theta$. On the other hand, when proving the boundness of $\hat\theta(t)$, we have shown that $\hat\theta(t)$ will enter and stay inside the closed cube $\mathcal{B}_{0}^q$. Thus, by recalling that $\mathcal{Z}_c$ is invariant under (\ref{sys-czd}), the value of $\hat\theta$ on $\omega(\mathcal{M})$ is determined by the properties of the system
\beeq{\label{ideal-auxzd}\ba{l}
\dot \mathbf{z} = \mathbf{f}(\mathbf{z})\,\\
\dot {\hat\theta} = \beta(\tau(\mathbf{z}))[\phi(\hat\theta,\tau(\mathbf{z}))-\phi(\theta,\tau(\mathbf{z}))]  - \mbox{dzv}(\hat\theta)\,\\
\ea}
where the initial condition $\mathbf{z}_0\in\mathcal{Z}_c$ and $\hat\theta_0\in\mathcal{B}_0^q$. It is noted that $\hat\theta(t)\in\mathcal{B}_0^q$ for all $t\geq0$ under (\ref{ideal-auxzd}).

Then, choose $V_{\tilde\theta}=\frac{1}{2}|\tilde\theta|^2$ with $\tilde\theta=\hat\theta-\theta$, whose time derivative along  (\ref{ideal-auxzd}) can be given by
\[
\dot V_{\tilde\theta} = (\hat\theta-\theta)^{\top}\beta(\tau(\mathbf{z}))[\phi(\hat\theta,\tau(\mathbf{z}))-\phi(\theta,\tau(\mathbf{z}))]-\tilde\theta^{\top}\mbox{dzv}(\hat\theta)\,.
\]
Bearing in mind the definition of $\mbox{dzv}(\cdot)$, observe that
\beeq{
(\hat\theta-\theta(\rho))^{\top}\mbox{dzv}(\hat\theta)\geq 0 \quad \mbox{for all $\hat\theta\in\mathbb{R}^p$ and $\rho\in\mathcal{P}$}\,.
}
This, together with the first part of Assumption \ref{ass-PE}, implies that under the flow (\ref{ideal-auxzd}),
\beeq{
\dot V_{\tilde\theta} \leq 0\,,
}
where the equality holds if and only if
\[\ba{l}
(\hat\theta-\theta)^{\top}\beta(\tau(\mathbf{z}))[\phi(\hat\theta,\tau(\mathbf{z}))-\phi(\theta,\tau(\mathbf{z}))]=0\,\\
(\hat\theta-\theta)^{\top}\mbox{dzv}(\hat\theta)=0\,.
\ea\]
Thus, $\hat\theta(t)$ converges to some constant value $\hat\theta^{\infty}$ as $t$ goes to infinity. By LaSalle's invariance theorem, this $\hat\theta^{\infty}$ necessarily is such that
\beeq{\label{nece-theta}\ba{l}
(\hat\theta^{\infty}-\theta)^{\top}\beta(\tau(\mathbf{z}))[\phi(\hat\theta^{\infty},\tau(\mathbf{z}))-\phi(\theta,\tau(\mathbf{z}))]=0\,\\
(\hat\theta^{\infty}-\theta)^{\top}\mbox{dzv}(\hat\theta^{\infty}) = 0\,\\
\beta(\tau(\mathbf{z}))[\phi(\hat\theta^{\infty},\tau(\mathbf{z}))-\phi(\theta,\tau(\mathbf{z}))]-\mbox{dzv}(\hat\theta^{\infty}) = 0\,.\\
\ea}
It is noted that the second of (\ref{nece-theta}) indicates that $\mbox{dzv}(\hat\theta^{\infty}) = 0$.
This further reduces (\ref{nece-theta}) to
\[
\beta(\tau(\mathbf{z}))[\phi(\hat\theta^{\infty},\tau(\mathbf{z}))-\phi(\theta,\tau(\mathbf{z}))]=0\,.
\]
By Assumption  \ref{ass-PE}.(ii), we have $\hat\theta^{\infty}=\theta$. This completes the proof. $\blacksquare$

\subsection{Proof of Lemma 2}
\label{subsec-proof-2}

Due to the special cascaded-structure of system (\ref{auxisys}) and since functions $\beta$ and $\phi$ are constructed to be globally Lipschitz and bounded, with the choice of $\lambda>0$ and Assumption \ref{ass-LES}, it is clear that the proof is completed if for any $\mathbf{z}_0\in\mathcal{Z}_c$, the origin of the \emph{linear time-varying system}
\beeq{\ba{rcl}\label{linear-sys-2}
\dst\dot {\tilde\theta} &=& \beta(\tau(\mathbf{z}(t,\mathbf{z}_0)))\dst\frac{\partial \phi(\theta,\tau(\mathbf{z}(t,\mathbf{z}_0)))}{\partial \theta}\tilde\theta
\ea}
with $\tilde\theta=\hat\theta-\theta$,
is shown to be uniformly exponentially stable.

Since $\mathbf{z}(t,\mathbf{z}_0)$ is the solution of the autonomous system (\ref{sys-czd}) passing through $\mathbf{z}_0$ at $t=0$, (\ref{linear-sys-2}) can be rewritten as a cascaded \emph{autonomous system}, having the form
\beeq{\label{linear-sys}\ba{rcl}
\dot \mathbf{z} &=& \mathbf{f}(\mathbf{z})\,\\
\dst\dot {\tilde\theta} &=& \beta(\tau(\mathbf{z}))\dst\frac{\partial \phi(\theta,\tau(\mathbf{z}))}{\partial \theta}\tilde\theta\,.\\
\ea}
We then calculate the derivative of $V_{\tilde\theta}$ as
\[
\dot V_{\tilde\theta} = \tilde\theta^{\top}\beta(\tau(\mathbf{z}))\dst\frac{\partial \phi(\theta,\tau(\mathbf{z}))}{\partial \theta}\tilde\theta\leq0
\]
where the inequality is obtained by using Assumption \ref{ass-PE}.(i).
Then, similar to the proof of Lemma \ref{lemma--1}, by LaSalle's invariance theorem and Assumption \ref{ass-PE}.(ii), we can conclude that system (\ref{linear-sys}) is uniformly asymptotically stable at the set $\mathcal{Z}_{c}\times\{0\}$, for any initial condition  $(\mathbf{z}_0,\tilde\theta_0)\in\mathcal{Z}_c\times\mathcal{R}^q$.
In other words, for any $\varepsilon>0$ and $(\mathbf{z}_0,\tilde\theta_0)\in\mathcal{Z}_c\times\mathcal{R}^q$, there exists $T_\varepsilon>0$ such that
\beeq{\label{ineq}
|\tilde\theta(t)|=\mbox{dist}\left((\mathbf{z}(t),\tilde\theta(t)),\mathcal{Z}_c\times\{0\}\right)\leq \varepsilon\quad \mbox{for all } t\geq T_\varepsilon\,.
}
Therefore, the zero equilibrium of the linear time-varying system (\ref{linear-sys-2}) is uniformly asymptotically stable, which also indicates the desired exponential stability.
 $\blacksquare$

\end{document}